\newtheorem{theorem}{Theorem}
\newtheorem{lemma}{Lemma} 
\newtheorem{assumption}{Assumption}
\newtheorem{proof}{Proof}
\newtheorem{remark}{Remark}%
\def\BibTeX{{\rm B\kern-.05em{\sc i\kern-.025em b}\kern-.08em
    T\kern-.1667em\lower.7ex\hbox{E}\kern-.125emX}}
\begin{document}
\title{A General Model-Based Extended State Observer with Built-In Zero Dynamics}
\author{Jinfeng Chen, Zhiqiang Gao, Yu Hu, and Sally Shao
\thanks{Jinfeng Chen, Zhiqiang Gao and Yu Hu are with the Center for Advanced Control Technologies, Cleveland State University, Cleveland, OH 44115 USA (e-mail: particlefilter2012@gmail.com; z.gao@csuohio.edu; y.hu20@vikes.csuohio.edu). }
\thanks{Sally Shao is a Professor Emeritus in the Department of Mathematics and Statistics, Cleveland
State University, Cleveland, OH 44115 USA (e-mail: s.shao@csuohio.edu).}
}

\maketitle

\begin{abstract}
    A general model-based extended state observer (GMB-ESO) is proposed for 
    single-input single-output linear time-invariant systems with a given 
    state space model, where the total disturbance, a lump sum of model 
    uncertainties and external disturbances, is defined as an extended 
    state in the same manner as in the original formulation of ESO. The 
    conditions for the existence of such an observer, however, are shown 
    for the first time as 1) the original plant is observable; and 2) there 
    is no invariant zero between the plant output and the total disturbance. 
    Then, the finite-step convergence and error characteristics of GMB-ESO 
    are shown by exploiting its inherent connection to the well-known 
    unknown input observer (UIO). Furthermore, it is shown that, with 
    the relative degree of the plant greater than one and the observer 
    eigenvalues all placed at the origin, GMB-ESO produces the identical 
    disturbance estimation as that of UIO. Finally, an improved GMB-ESO 
    with built-in zero dynamics is proposed for those plants with zero 
    dynamics, which is a problem that has not been addressed in all 
    existing ESO designs. 
\end{abstract}

\begin{IEEEkeywords}
Extended state observer, Unknown input observer, Disturbance observer, 
Uncertain linear systems.
\end{IEEEkeywords}

\section{Introduction}
\label{sec:introduction}
Inspired by the war time research and developments, control theory 
(classical and modern) as being taught today worldwide was born largely 
after WWII and grew rapidly during the second half of the last century. 
It focuses on the stability and optimality, premised on the given 
mathematical model of the physical process \cite{aastrom2014control}. 
In the meanwhile, the engineering practice has been dominated by PID 
during all this time, largely model-free. The gap between theory and 
practice has been evident \cite{samad2017survey} and difficult to bridge. 

The crucial gap between theory and practice is the methodology in 
coping with unknown dynamics and unmeasurable disturbances, which 
has been the driving force behind recent developments in active 
disturbance rejection control (ADRC) \cite{han1998active,han2009}. 
Crucial to the success of ADRC is the extended state observer (ESO), 
where all such uncertainties and disturbances are lumped equivalently 
as the total disturbance, treated as an unknown input, defined as 
the extended state, estimated by ESO, and then largely canceled within 
the ESO bandwidth by the control force. In doing so, the plant is 
forced to behave like an ideal integrator chain, for which a 
proportional-derivative (PD) like control law can be used to 
meet most design specifications in practice. As a viable answer 
to the problem of uncertainties and disturbances in a wide range 
of engineering practices \cite{huang2014active,zhang2021overview}, 
ADRC has been added by Mathworks Inc. to its Simulink Control Design 
toolbox \cite{matlabADRC}, next to Model Reference Adaptive Control and PID Auto-tuning. 

The success of ADRC in engineering practice also encourages theoretical 
studies, such as the combination of ESO and proportional-integral (PI) 
controller for better disturbance rejection and transient 
performance \cite{xue2020integrating}; the cascade ESO design to 
reduce the high sensitivity to high-frequency measurement 
noise \cite{lakomy2021cascade}; the series of cascading first-order ESOs 
to overcome peaking and to relax the matching condition of the control 
gain \cite{ran2021new}. Even with such efforts, ESO is still more or 
less a tool of choice by engineers rather than a mature field of academic 
study. Although \cite{li2012generalized} tried to find a systematic method to 
guide ESO design for general systems, the assumption, that the total 
disturbance has a constant value in steady state, is too strong and 
seldom met in practice. The general model-based ESO (GMB-ESO) proposed in 
this paper, however, only assumes no invariant zeros between the output 
and the total disturbance, which makes GMB-ESO more practical for general 
systems as the total disturbance can be arbitrary unknown signal. 

Inspired by the related and well-developed field of research in unknown 
input observer (UIO), this paper is also set to disclose its inherent connection 
to GMB-ESO and the shared insight that ``modeling uncertainties of this kind are 
the so-called unknown inputs which facilitate the determination of a simple 
linear system description'' \cite{hou1998optimal}, albeit maybe not the 
ideal chain of integrators in conventional ESO. The necessary and 
sufficient conditions for the 
existence of UIO have long been rigorously established, including the exact 
decoupling condition and the stable invariant zeros condition \cite{hautus1983}, 
as well as various methods of relaxing them for the sake of practicality 
\cite{ansari2019, sundaram2007, chakrabarty2017delayed, chakrabarty2017state, 
kong2019, ichalal2020}. 
In addition, UIO has also been applied in the studies of fault 
detection/isolation \cite{khan2022design}, decentralized observer 
\cite{hou1994design} and unbiased input and state estimation 
\cite{abooshahab2022simultaneous}. 

The similarity in conceptualization of UIO and ESO is clearly evident in the 
above description, but seldom addressed in the literature. It is of 
great interest to investigate 1) if UIO and ESO can be shown as 
equivalent under some circumstances; 2) if the rigorous analysis of 
UIO can be extended to ESO for better understanding of its error 
dynamics; and 3) if the ESO design can be generalized, for the first 
time, to the cases where the state space model of the plant is fully 
given but with, of course, significant uncertainties and, possibly, 
zero dynamics.

The paper is organized as follows: GMB-ESO is first proposed in Section \ref{sec:GMBESO} 
for a general system with two assumptions. The main theorems illustrating 
its inherent connection to UIO, and their numerical 
simulations are given in Section \ref{sec:connection}. GMB-ESO with built-in zero dynamics 
is introduced in Section \ref{sec:ESOwithzeros}, before 
conclusions are given in Section \ref{sec:conclusion}.

\section{A General Model-Based ESO}
\label{sec:GMBESO}
Consider the following single-input single-output (SISO) linear time-invariant 
discrete-time system with uncertainty
\begin{equation}\label{eq_1}
    \begin{cases}
        x(k+1)=A_0x(k)+B_0u(k)+E_0 f(k)\\
        y(k)=C_0x(k)
    \end{cases}
\end{equation}
where \(x\in\mathbb{R}^n\), \(u\in\mathbb{R}\), and \(y\in\mathbb{R}\) represent 
the state, the known input, and the output, respectively, \(f\in\mathbb{R}\) denotes 
the unknown input and the total disturbance in the UIO and ESO literature, respectively, 
and \(A_0\), \(B_0\), \(E_0\) and \(C_0\) are real and known matrices with 
appropriate dimensions. The following assumptions on system \eqref{eq_1} will be 
adopted throughout this paper.

\begin{assumption}\label{assum1}
    \((A_0, C_0)\) is observable.
\end{assumption}

\begin{assumption}\label{assum2}
    \((A_0, E_0, C_0)\) has no invariant zeros. 
\end{assumption}

Assumption \ref{assum1} is a necessary condition for the existence of an observer. 
Assumptions \ref{assum1} and \ref{assum2} are necessary and sufficient 
conditions for ESO to have the inherent connection with UIO, which will be 
shown in this paper. 

The following Lemma \ref{thm1} gives a property of no invariant zero systems, 
which is used in the proof of Lemma \ref{lem1}. 
Without loss of generality, the known control input \(u\) is ignored in 
all following proofs. 
\begin{lemma}\label{thm1}
    If \((A_0, C_0)\) is observable, then the following statements are equivalent:
    \begin{enumerate}
        \item \(\text{rank}\left(\begin{bmatrix}
        A_0-zI_n & E_0\\
        C_0 & 0
        \end{bmatrix}\right)=n+1\) for all \(z\in \mathbb{C} \), 
        (i.e., the system has no invariant zeros between the disturbance and the output). \label{lemma1a}
        \item  \(C_0E_0 = 0\), \(C_0A_0E_0 = 0\), \(\cdots\), \(C_0A_0^{n-2}E_0=0\), 
        \(C_0A_0^{n-1}E_0\neq 0\). \label{lemma1b}
    \end{enumerate}
\end{lemma}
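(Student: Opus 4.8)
The plan is to prove the two implications (2)$\Rightarrow$(1) and (1)$\Rightarrow$(2) separately, using the Markov parameters $h_j := C_0 A_0^{j} E_0$ as the common thread. In this language condition (2) says simply that the first nonzero Markov parameter is $h_{n-1}$, i.e. the relative degree from $f$ to $y$ is exactly $n$.

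For (2)$\Rightarrow$(1) I would argue by contradiction, and this is where observability is genuinely used. Suppose the Rosenbrock matrix drops rank at some $z_0 \in \mathbb{C}$, so there is a nonzero $(v^\top, w)^\top$ with $(A_0 - z_0 I)v + E_0 w = 0$ and $C_0 v = 0$. The first equation rewrites as $A_0 v = z_0 v - E_0 w$, which feeds a short induction: assuming $C_0 A_0^{k-1} v = 0$, one gets $C_0 A_0^{k} v = z_0\, C_0 A_0^{k-1} v - w\, C_0 A_0^{k-1} E_0$, and the last term vanishes for $k \le n-1$ precisely because $h_0 = \cdots = h_{n-2} = 0$. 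Hence $C_0 A_0^{k} v = 0$ for $k = 0,\dots,n-1$, so the observability matrix annihilates $v$; by Assumption~\ref{assum1} this forces $v = 0$. Then $E_0 w = 0$, and since $h_{n-1}\neq 0$ forces $E_0 \neq 0$, we obtain $w = 0$, contradicting nontriviality.

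For (1)$\Rightarrow$(2) I would pass to the scalar polynomial $N(z) := C_0\,\mathrm{adj}(zI-A_0)\,E_0$. A Schur-complement computation yields the polynomial identity $\det\!\begin{bmatrix} A_0 - zI & E_0\\ C_0 & 0\end{bmatrix} = (-1)^{n} N(z)$, so statement (1) is equivalent to $N$ having no complex root. Writing $N(z) = \det(zI - A_0)\,C_0 (zI - A_0)^{-1} E_0$ and matching powers shows the coefficient of $z^{\,n-1-k}$ equals $h_k$ plus a fixed combination of $h_0,\dots,h_{k-1}$, a unit lower-triangular relation inherited from the characteristic polynomial. Thus the first nonzero Markov parameter controls $\deg N$: if the first nonzero one is $h_{r-1}$ with $r < n$, then $\deg N = n-r \ge 1$ and $N$ has a root by the fundamental theorem of algebra, while if $h_0 = \cdots = h_{n-1} = 0$ then $N \equiv 0$. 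Either way, a failure of (2) produces a root of $N$, hence a rank drop, contradicting (1).

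The main obstacle is the converse direction (1)$\Rightarrow$(2): converting a purely rank-theoretic hypothesis into a statement about the $h_j$. The cleanest bridge is the determinant identity $\det P(z) = (-1)^{n} C_0\,\mathrm{adj}(zI - A_0) E_0$ together with the triangular dependence of its coefficients on $h_0,\dots,h_{n-1}$; once this is in place the fundamental theorem of algebra finishes the argument. I would take care to establish that identity as an \emph{equality of polynomials}, valid on the cofinite set where $A_0 - zI$ is invertible and therefore everywhere, so that it remains legitimate when $z$ is an eigenvalue of $A_0$ and the Schur complement is only formal.
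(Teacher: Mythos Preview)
Your argument is correct in both directions, but it proceeds quite differently from the paper. The paper handles both implications at once by a single similarity transformation: since $(A_0,C_0)$ is observable, it puts the Rosenbrock matrix into observable canonical form, so that the last column becomes $[C_0E_0,\,C_0A_0E_0,\,\dots,\,C_0A_0^{n-1}E_0,\,0]^\top$ and the upper-left block is companion; a Laplace expansion then shows the determinant is a nonzero constant in $z$ if and only if condition~(2) holds. Your route is more modular: a kernel/induction argument for (2)$\Rightarrow$(1) that uses observability directly, and for (1)$\Rightarrow$(2) the Schur-complement identity $\det P(z)=(-1)^n\,C_0\,\mathrm{adj}(zI-A_0)\,E_0$ together with the unit lower-triangular relation between the polynomial coefficients and the Markov parameters $h_j$. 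The paper's approach is shorter and treats both directions symmetrically; yours is coordinate-free, makes explicit that (1)$\Rightarrow$(2) does not actually need observability, and exposes the transfer-function meaning (relative degree equals $n$) more transparently.
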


\begin{proof}
    Since \((A_0, C_0)\) is observable, there exists an invertible matrix \(\bar{S}_1\) 
    such that 
    \begin{equation}\label{thm_eq1}
        \bar{S}_1  \begin{bmatrix}
            A_0 -zI_n & E_0\\
            C_0 & 0
        \end{bmatrix}=
        \begin{bmatrix}
            \bar{A}_0 -zI_n & \bar{E}_0\\
            \bar{C}_0 & 0
        \end{bmatrix}\bar{S}_1,
    \end{equation}
    where \(\bar{A}_0=\begin{bmatrix}
        0 & 1 & \cdots & 0\\
        \vdots & \vdots & \ddots & \vdots\\
        0 & 0 & \cdots & 1 \\
        a_1 & a_2 & \cdots & a_n
    \end{bmatrix}\), 
        \(S_0=\begin{bmatrix}
        C_0\\
        C_0A_0\\
        \vdots\\
        C_0A_0^{n-1}
    \end{bmatrix}\), 
    \(\bar{S}_1=\begin{bmatrix}
        S_0 & 0_{n\times 1}\\
        0_{1\times n} & 1
    \end{bmatrix}\), 
    \(\bar{C}_0=[1, 0, \cdots, 0]_{1\times n}\), \(\bar{E}_0=S_0 E_0\), and
    \([a_1, \cdots, a_n]\triangleq C_0A_0^nS_0^{-1}\). 

    Due to the similarity property in \eqref{thm_eq1}, we have 
        \begin{equation}\label{thm_eq2}
            \begin{array}{r@{}l}
                \text{rank} & \left(\begin{bmatrix}
                    A_0-zI_n & E_0\\
                    C_0 & 0
                \end{bmatrix}\right)=
                \text{rank} \left(\begin{bmatrix}
                    \bar{A}_0-zI_n & \bar{E}_0\\
                    \bar{C}_0 & 0
                \end{bmatrix}\right)\\
                = & \text{rank}\begin{bmatrix}
                    -z & 1 & 0 & \cdots & 0 & C_0E_0\\
                    0 & -z & 1 & \cdots & 0 & C_0A_0E_0\\
                    \vdots & \vdots & \vdots & \ddots & \vdots & \vdots \\
                    0 & 0 & 0 & \cdots & 1 & C_0A_0^{n-2}E_0\\
                    a_{1} & a_{2} & a_{3} & \cdots & a_{n}-z & C_0A_0^{n-1}E_0\\
                    1 & 0 & 0 & \cdots & 0 & 0
                \end{bmatrix}\\
                = & \text{rank}\begin{bmatrix}
                    1 & 0 & \cdots & 0 & C_0E_0\\
                    -z & 1 & \cdots & 0 & C_0A_0E_0\\
                    \vdots & \vdots & \ddots & \vdots & \vdots \\
                    0 & 0 & \cdots & 1 & C_0A_0^{n-2}E_0\\
                    a_{2} & a_{3} & \cdots & a_{n}-z & C_0A_0^{n-1}E_0
                \end{bmatrix} + 1\\
                = & n+1 \quad \text{for all} \ z\in \mathbb{C}. 
            \end{array}
        \end{equation}
    Therefore, the determinant of the last matrix in \eqref{thm_eq2} should be a 
    nonzero constant. All entries including variable \(z\) are in the subdiagonal. 
    By computing the determinant using Laplace Expansion, it is easy to show that the determinant 
    of the last matrix in \eqref{thm_eq2} is a nonzero constant for 
    all \(z\in \mathbb{C}\), i.e., there is no \(z\) in the determinant, 
    if and only if \(C_0E_0=C_0A_0E_0=\cdots=C_0A_0^{n-2}E_0=0\) and 
    \(C_0A_0^{n-1}E_0\neq 0\). \quad \(\square\)
\end{proof}

Note that, from Lemma \ref{thm1}, the system \eqref{eq_1} with no invariant zeros 
between \(f\) and \(y\) satisfies condition \ref{lemma1b}), which is the structure condition 
for the ESO design proposed in \cite{bai2019} for continuous-time systems.

The conventional ESO conceptualizes the plant from input-output view using an 
observability canonical form, whereas, in this paper, from the view of state space model, 
system \eqref{eq_1} with all available model information is considered. 
Note that the observability canonical form in conventional ESO is a 
special case for system \eqref{eq_1}. 
The key to ESO design is the idea that the total disturbance \(f\) is treated as 
an additional state, defined as an extended state. Then ESO can be used to estimate 
the augmented state. The augmented system can be written as 
\begin{equation}\label{eq_2}
    \begin{cases}
        X(k+1) = A X(k) + Bu(k) + E \Delta f(k) \\
        y(k) = C X(k)
    \end{cases}
\end{equation}
where \(X(k)=\begin{bmatrix}
    x(k)\\
    f(k)
\end{bmatrix}\), 
\(
    A = \begin{bmatrix}
    A_0 & E_0\\
    0_{1 \times n} & 1
    \end{bmatrix}
\), 
\(  
    B = \begin{bmatrix}
    B_0\\
    0
    \end{bmatrix}
\), 
\(  C=[C_0, 0]
\), 
\(  E = [0, \cdots, 0, 1]_{1 \times (n+1)}^T
\), and \(\Delta f(k) = f(k+1) - f(k)\). 

\begin{lemma}\label{lem1}
    If Assumptions \ref{assum1} and \ref{assum2} hold, 
    then augmented system \eqref{eq_2} is observable. 
\end{lemma}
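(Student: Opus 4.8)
The plan is to apply the Popov--Belevitch--Hautus (PBH) observability test to the pair \((A, C)\) of the augmented system \eqref{eq_2}, i.e., to show that
\[
\text{rank}\begin{bmatrix} A - zI_{n+1} \\ C \end{bmatrix} = n+1 \quad \text{for all } z\in\mathbb{C}.
\]
Exploiting the block structure \(A=\begin{bmatrix} A_0 & E_0\\ 0_{1\times n} & 1\end{bmatrix}\) and \(C=[C_0,\,0]\), this test matrix reads
\[
\begin{bmatrix} A_0 - zI_n & E_0 \\ 0_{1\times n} & 1-z \\ C_0 & 0 \end{bmatrix},
\]
so I would equivalently argue that its kernel is trivial. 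Supposing \(\begin{bmatrix} w\\ \alpha\end{bmatrix}\), with \(w\in\mathbb{C}^n\) and \(\alpha\in\mathbb{C}\), lies in this kernel yields the three equations \((A_0 - zI_n)w + E_0\alpha = 0\), \((1-z)\alpha = 0\), and \(C_0 w = 0\).

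The argument then splits on whether \(z=1\). When \(z\neq 1\), the middle equation forces \(\alpha=0\), and the remaining two equations collapse to \(\begin{bmatrix} A_0 - zI_n\\ C_0\end{bmatrix}w = 0\); Assumption \ref{assum1}, via the PBH test for \((A_0,C_0)\), then gives \(w=0\), so the kernel vector vanishes. When \(z=1\), the middle equation is vacuous, and the first and third equations combine into
\[
\begin{bmatrix} A_0 - I_n & E_0 \\ C_0 & 0 \end{bmatrix}\begin{bmatrix} w\\ \alpha\end{bmatrix} = 0.
\]
Here Lemma \ref{thm1} (equivalently Assumption \ref{assum2}) applies: the Rosenbrock matrix on the left has full rank \(n+1\) for every \(z\in\mathbb{C}\), in particular at \(z=1\), so again \(w=0\) and \(\alpha=0\). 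Having exhausted both cases, the kernel is trivial and observability follows.

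I expect the case \(z=1\) to be the crux. The augmentation inserts a new mode at the eigenvalue \(1\), coming from the integrator model \(f(k+1)=f(k)+\Delta f(k)\), and observability of precisely that mode is equivalent to the original triple \((A_0,E_0,C_0)\) having no invariant zero at \(z=1\). Assumption \ref{assum1} disposes of all the inherited modes effortlessly but says nothing about the newly introduced one; consequently the entire weight of the proof rests on the no-invariant-zeros hypothesis of Assumption \ref{assum2} at this single value of \(z\), which is exactly why that assumption is needed alongside observability of the original plant.
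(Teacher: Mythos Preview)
Your proof is correct and takes a genuinely different route from the paper. The paper invokes the Markov-parameter characterization of Lemma~\ref{thm1} (i.e., \(C_0E_0=\cdots=C_0A_0^{n-2}E_0=0\), \(C_0A_0^{n-1}E_0\neq 0\)) to build an explicit similarity transformation \(S_1\) that carries \((A,C)\) into an observability canonical form \((A_1,C_1)\), and then reads off observability from that form. You instead apply the PBH test directly to the augmented pair, splitting on whether \(z=1\); Assumption~\ref{assum1} handles \(z\neq 1\) and Assumption~\ref{assum2} (only at the single point \(z=1\)) handles the new integrator mode.

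Your argument is shorter and more transparent about \emph{where} each hypothesis is used: in particular, it exposes that observability of the augmented system needs only the absence of an invariant zero at \(z=1\), not the full no-zeros condition. The paper's longer route, on the other hand, is not wasted effort: the transformation \(S_1\) and the resulting canonical triple \((A_1,E_1,C_1)\) are reused verbatim in the proofs of Theorems~\ref{thm2} and~\ref{thm3}, so the constructive approach there doubles as scaffolding for the later error-dynamics analysis. Either proof is fine for the lemma in isolation; the paper's choice is dictated by what comes next.
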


\begin{proof}
    Since \((A_0, E_0, C_0)\) has no invariant zeros and \((A_0, C_0)\) is observable, 
    condition \ref{lemma1b}) of Lemma \ref{thm1} holds. Thus, 
    there exists an invertible matrix \(S_1\) such that
    \((A, E, C)\) is equivalent to \((A_1, E_1, C_1)\), i.e., 
    \begin{equation}\label{lem_eq1}
        A_1=S_1 A S_1^{-1}, E_1=S_1 E, C_1=C S_1^{-1},
    \end{equation}
    where 
    \(A_1=\begin{bmatrix}
        \bar{A}_0 & E_0^\prime\\
        0_{1\times n} & 1
    \end{bmatrix}\), 
    \(S_1=\begin{bmatrix}
        S_0 & 0_{n\times 1}\\
        0_{1\times n} & m
    \end{bmatrix}\), \(m=C_0A_0^{n-1}E_0\), 
    \(E_0^\prime=[0, \cdots, 0, 1]^T_{1\times n}\), 
    \(E_1 = [0, \cdots, 0, m]^T_{1\times (n+1)}\), and 
    \(C_1=[1, 0, \cdots, 0]_{1\times (n+1)}\). 

    System \((A_1, E_1, C_1)\) is of an observability canonical form as in 
    conventional ESO, where \((A_1, C_1)\) is observable. Since the observability 
    property is invariant under any equivalent transformation, 
    \((A, C)\) is also observable. \quad \(\square\)
\end{proof}

According to Lemma \ref{lem1}, under assumptions \ref{assum1} and \ref{assum2}, 
a state observer for system \eqref{eq_2} can be designed as 
\begin{equation}\label{eq_3}
    \hat{X}(k+1) = A\hat{X}(k) + Bu(k) + L\left(y(k)-C\hat{X}(k)\right),
\end{equation}
where \(\hat{X}=[\hat{x}, \hat{f}]^T\) is an estimate of the state \(X\), 
and \(L \in \mathbb{R} ^{n+1}\) is the observer gain vector. 
From Lemma \ref{lem1}, the eigenvalues of \(A-LC\) can be arbitrarily 
placed inside the unit circle. For the sake of simplicity, all eigenvalues of 
\(A-LC\) are placed at \(\omega_o\). Note that, for continuous-time systems, 
the eigenvalues are all placed at \(-\omega_o\), where \(\omega_o\) is denoted 
as the observer bandwidth of ESO \cite{gao2003scaling}. 
Different to other ESOs, in this paper, the 
Luenberger state observer \eqref{eq_3} is called GMB-ESO as the 
matrices \(A_0\), \(B_0\), \(C_0\) and \(E_0\) are in general form with 
model information. Moreover, unlike the assumption in \cite{li2012generalized}, 
the total disturbance \(f\) can be arbitrary signal rather than constant 
value in steady state. 

Based on \eqref{eq_2} and \eqref{eq_3}, the error dynamics of 
GMB-ESO \eqref{eq_3} is updated according to the following equation: 
\begin{equation}\label{eq_5}
    e(k+1) = (A-LC)e(k) + E\Delta f(k),
\end{equation}
where \(e(k) = X(k) - \hat{X}(k)\), and \(E\Delta f(k)\) is caused by the 
ignorance of the unknown input \(\Delta f\) in \eqref{eq_3}.

\section{The Connection between GMB-ESO and UIO}
\label{sec:connection}
In this section, the design method of UIO is first introduced, followed by its inherent 
connection to GMB-ESO. Then, numerical simulations are given to validate their 
connection. 

\subsection{Unknown Input Observer}
Similar to ESO, UIO is designed to estimate the state and reconstruct the unknown 
input by using the disturbance decoupling principle. Thus, the dynamics 
of estimation error of UIO is decoupled with the unknown input, 
that is, there is no \(E\Delta f\) in \eqref{eq_5}. 
The necessary and sufficient conditions for the existence of a UIO for 
system \eqref{eq_1} are: 
1) \(\text{rank}(C_0 E_0) = \text{rank}(E_0)\);
2) \(\text{rank} \begin{bmatrix}
    zI_{n} - A_0 & -E_0\\
    C_0 & 0
\end{bmatrix} = n+\text{rank}(E_0)\), \(\forall z \in \mathbb{C} \), 
\(|z| \geqslant 1\) \cite{valcher1999}. 
It is easy to verify that condition 1) is not satisfied for system \eqref{eq_1}, 
but condition 2) is satisfied. Thus, there is no full state real time UIO existed. 
However, a delayed UIO can be constructed \cite{sundaram2007}. 

For the simplicity of studying the connection between GMB-ESO and UIO, the augmented 
system \eqref{eq_2} is used to design a UIO in our simulations because the estimated 
\(\hat{f}\) should be equal to its actual value no matter whether the system 
is augmented or not. 

The full-order state UIO for system \eqref{eq_2} is of the form
\begin{equation}\label{eq_6}
    \hat{X}(k+1)=J\hat{X}(k)+FY[k:k+n+1]+GU[k:k+n],
\end{equation}
where \(Y[k:k+n+1]\) includes all the measurements from \(y(k)\) to \(y(k+n+1)\),
\(U[k:k+n]\) all the inputs from \(u(k)\) to \(u(k+n)\), 
and \(J\), \(F\), and \(G\) are matrices to be determined 
such that the estimation error
\(e(k)=X(k)-\hat{X}(k)\) satisfies
\begin{equation}\label{eq_7}
    e(k+1)=Ne(k),
\end{equation}
and \(N\) is asymptotically stable/nilpotent. 

For the detailed procedure of designing a UIO with delay, 
readers can refer to \cite{sundaram2007}.

\subsection{Finite-Step Convergence}
In this subsection, we are going to show the disturbance \(\hat{f}\) 
estimated by GMB-ESO and UIO are exactly the same mathematically. 
Since the error dynamics of UIO in \eqref{eq_7} is asymptotically 
stable and future measurements are used, the estimated disturbance 
\(\hat{f}\) is equal to the actual disturbance \(f\) with a suitable time delay. 
Reference \cite{ansari2019} has shown that UIO requires a certain number 
of measurements, which are the relative degree of the system plus one 
for a SISO system, to estimate the states. Therefore, the delay of estimation of \(f\) 
is \(n+1\) steps for system \eqref{eq_1}. 

However, the error dynamics of GMB-ESO in \eqref{eq_5} has an input 
term \(\Delta f\). The estimated disturbance \(\hat{f}\) is still equal to 
the actual disturbance \(f\) with a delay of \(n+1\) steps if the system dynamics 
has no invariant zeros between \(f\) and \(y\) and all eigenvalues of the observer 
\eqref{eq_3} are at the origin, which is shown in the 
following Theorem \ref{thm2}. 

\begin{theorem}\label{thm2}
    Condition \eqref{eq_thm1} holds for observer \eqref{eq_3}: 
    \begin{equation}\label{eq_thm1}
        \hat{f}(k)=f(k-n-1)\quad \text{for}\quad k>n+1, 
    \end{equation}
    if and only if (a) system \eqref{eq_1} satisfies Assumptions \ref{assum1} and 
    \ref{assum2}, and (b) all eigenvalues of the observer are placed at the origin. 
\end{theorem}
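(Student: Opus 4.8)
The plan is to reduce everything to the scalar quantities $E^{T}M^{i}E$, where $M\triangleq A-LC$, since the last entry of $E$ extracts the disturbance error. From \eqref{eq_5} the error solution is $e(k)=M^{k}e(0)+\sum_{j=0}^{k-1}M^{\,k-1-j}E\,\Delta f(j)$, and $E^{T}e(k)=f(k)-\hat f(k)$. Because $f(k)-f(k-n-1)=\sum_{i=0}^{n}\Delta f(k-1-i)$ is a telescoping sum, the assertion $\hat f(k)=f(k-n-1)$ is equivalent to $E^{T}e(k)=\sum_{i=0}^{n}\Delta f(k-1-i)$ holding for every admissible disturbance and every initial error. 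I will treat the two implications separately around this equivalence.

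For sufficiency I would invoke Lemma \ref{lem1}: under Assumptions \ref{assum1} and \ref{assum2} the pair $(A,C)$ is observable, so placing all observer eigenvalues at the origin makes $M$ nilpotent with $M^{n+1}=0$. This annihilates $M^{k}e(0)$ for $k>n+1$ and truncates the convolution to $e(k)=\sum_{i=0}^{n}M^{i}E\,\Delta f(k-1-i)$, so it remains only to show $E^{T}M^{i}E=1$ for $i=0,\dots,n$. I would verify this in the observability canonical coordinates of Lemma \ref{lem1}, where a short transformation check gives $E^{T}M^{i}E=(M_{1}^{i})_{n+1,n+1}$ with $M_{1}=A_{1}-L_{1}C_{1}$. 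Since $L_{1}C_{1}$ perturbs only the first column, all other columns of $M_{1}$ coincide with those of $A_{1}$, so one application of $M_{1}$ lowers the smallest index in the support of a vector by at most one; starting from $e_{n+1}$ it therefore takes $n$ steps for the excitation to reach the first (output) coordinate. As the last-coordinate recursion couples only to the first coordinate, the $(n{+}1)$-st entry stays equal to $1$ through step $n$, giving $E^{T}M^{i}E=1$ for $i\le n$, and the telescoping identity yields $\hat f(k)=f(k-n-1)$.

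For necessity I would exploit that the identity must hold for all disturbances and all initial errors. Taking $\Delta f\equiv 0$ with arbitrary $e(0)$ forces $E^{T}M^{k}=0$ for $k\ge n+2$; applying a unit disturbance step at a late instant with $e(0)=0$ forces the finite-impulse-response pattern $E^{T}M^{i}E=1$ for $i=0,\dots,n$ and $E^{T}M^{i}E=0$ for $i\ge n+1$. This pattern gives $E^{T}(zI-M)^{-1}E=\sum_{i=0}^{n}z^{-(i+1)}=(z^{n}+\cdots+1)/z^{n+1}$, already in lowest terms because $z=0$ is not a numerator root; matching it against $N(z)/\det(zI-M)$ forces $\det(zI-M)=z^{n+1}$, i.e. $M$ is nilpotent — this is condition (b). The same transfer function has McMillan degree $n+1$, so the order-$(n{+}1)$ realization $(M,E,E^{T})$ is minimal, hence $(E^{T},M)$ is observable (and $(M,E)$ controllable). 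Any unobservable eigenvector of $(A_{0},C_{0})$ lifts to $\bar v=[v^{T},0]^{T}$ with $M\bar v=\lambda\bar v$; a nonzero $\lambda$ contradicts nilpotency, while $\lambda=0$ puts the $M$-invariant line $\mathrm{span}(\bar v)\subseteq\ker E^{T}$, contradicting observability of $(E^{T},M)$, so Assumption \ref{assum1} holds. Finally, for an invariant zero $z_{0}$ with direction $\begin{bmatrix}v\\w\end{bmatrix}$, observability forces $w\neq 0$, and a direct computation gives $(M-z_{0}I)\bar v=(1-z_{0})w\,E$ with $\bar v=[v^{T},w]^{T}$; for $z_{0}\neq 0$ the matrix $M-z_{0}I$ is invertible and evaluating $E^{T}\bar v=w$ through the known resolvent forces $z_{0}^{-(n+1)}=0$, whereas for $z_{0}=0$ the relation $M\bar v=wE$ with $M^{n+1}=0$ and $E^{T}M^{n}E=1$ forces $w=0$; both are contradictions, establishing Assumption \ref{assum2}.

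The main obstacle is the necessity of Assumption \ref{assum2}, specifically excluding an invariant zero at the origin: the natural ``the disturbance $z_{0}^{k}w$ hides in the output'' argument that dispatches every $z_{0}\neq 0$ collapses at $z_{0}=0$, because the offending disturbance is then supported only at $k=0$ and is invisible to the delayed comparison at $k>n+1$. Securing the uniform identity $(M-z_{0}I)\bar v=(1-z_{0})wE$ and closing the $z_{0}=0$ case through nilpotency and the already-established value $E^{T}M^{n}E=1$ is the delicate point; the support-propagation claim $(M_{1}^{i})_{n+1,n+1}=1$ in the sufficiency direction is the secondary technical step.
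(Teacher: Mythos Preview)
Your proposal is correct and takes a genuinely different route from the paper. For sufficiency, the paper carries the error dynamics through a chain of three explicit similarity transformations (to observability canonical form, then to an observer companion form via $S_2$ and $Q_1$), takes the $z$-transform, and reads off the telescoping identity from the last row of an explicitly computed matrix product; you instead work directly with the Markov parameters $E^{T}M^{i}E$, using a single change to the coordinates of Lemma~\ref{lem1} and a clean support-propagation argument in $M_{1}$ to show $(M_{1}^{i})_{n+1,n+1}=1$ for $i\le n$. This is shorter and more structural, at the cost of the support argument needing a careful induction (which it survives: the first coordinate of $M_{1}^{i}e_{n+1}$ stays zero for $i\le n-1$ because columns $2,\dots,n+1$ of $M_{1}$ coincide with those of $A_{1}$). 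For necessity, the paper only argues the failure of Assumption~\ref{assum2}: it repeats the same coordinate computations assuming eigenvalues at the origin and Assumption~\ref{assum1}, and shows that extra nonzero entries in $(Q_{1}S_{2})S_{1}E$ make $\hat f(k)$ a nontrivial combination of several delayed values. Your argument is considerably more complete: you extract the full impulse-response pattern from the hypothesis, identify the transfer function $E^{T}(zI-M)^{-1}E=(z^{n}+\cdots+1)/z^{n+1}$, and then use McMillan-degree/minimality to force condition~(b) and observability of $(E^{T},M)$, from which Assumption~\ref{assum1} and Assumption~\ref{assum2} follow by the eigenvector and resolvent arguments you outline (the $z_{0}=0$ case via $M^{n+1}=0$ and $E^{T}M^{n}E=1$ is handled correctly, and the $z_{0}=1$ edge case, where $1-z_{0}=0$, is even easier since then $M\bar v=\bar v$ contradicts nilpotency). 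What the paper's approach buys is explicit formulas that feed directly into the later Theorem~\ref{thm3}; what yours buys is a self-contained necessity proof that actually covers all three conditions of the ``only if''.
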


\begin{proof}
    The estimation error dynamics of observer \eqref{eq_3} for augmented 
    system \eqref{eq_2} is \eqref{eq_5}. 
    Then, by using Lemma \ref{lem1}, \eqref{eq_5} can be written as
    \begin{equation}\label{eq_8}
        e_1(k+1)=(A_1-L_1C_1)e_1(k)+E_1 \Delta f(k),
    \end{equation}
    where \(e_1(k)=S_1e(k)\), \(L_1=S_1L\), and \(A_1\), \(C_1\) and \(E_1\) can 
    be found in \eqref{lem_eq1}. \((A_1, C_1)\)
    can be converted into the observable canonical form with invertible 
    matrix \(S_2\) with
    \begin{equation}\label{eq_9}
        S_2A_1S_2^{-1}=A_2, C_1S_2^{-1}=C_2,
    \end{equation}
    where \(C_2=[1, 0, \cdots, 0]_{1\times (n+1)}\), 
    \(S_2=\begin{bmatrix}
        C_1\\
        C_1A_1\\
        \vdots\\
        C_1A_1^n
    \end{bmatrix}\), and
    \(A_2=\begin{bmatrix}
        0 & 1 & \cdots & 0 & 0\\
        0 & 0 & \cdots & 0 & 0\\
        \vdots & \vdots & \ddots & \vdots & \vdots \\
        0 & 0 & \cdots & 0 & 1\\
        -a_1 & a_1-a_2 & \cdots & a_{n-1}-a_n & a_n+1
    \end{bmatrix}\). 

    Thus, the estimation error in \eqref{eq_8} can be transformed into
    \begin{equation}\label{eq_10}
        e_2(k+1)=(A_2-L_2C_2)e_2(k)+E_2\Delta f(k),
    \end{equation}
    where \(e_2(k)=S_2e_1(k)\), \(E_2=S_2E_1=[0, 0, \cdots, m]_{1\times (n+1)}^T\), 
    and \(L_2=S_2L_1\). 

    Note that matrix \(A_2\) can be transformed into the following observer 
    companion form \(A_3\) by using invertible matrix \(Q_1\) so that we only 
    need to change the entries in the first column of \(A_3\) to place all poles, where 
    \begin{scriptsize}
    \(
        A_3=\begin{bmatrix}
            a_n+1 & 1 & \cdots & 0 & 0\\
            a_{n-1}-a_n & 0 & \cdots & 0 & 0\\
            \vdots & \vdots & \ddots & \vdots & \vdots \\
            a_1-a_2 & 0 & \cdots & 0 & 1\\
            -a_1 & 0 & \cdots & 0 & 0
        \end{bmatrix}
    \), 
    \(
        Q_1=\begin{bmatrix}
            1 & 0 & \cdots & 0 & 0\\
            -1-a_n & 1 & \cdots & 0 & 0\\
            \vdots & \vdots & \ddots & \vdots & \vdots \\
            a_3-a_2 & a_4-a_3 & \cdots & 1 & 0\\
            a_2-a_1 & a_3-a_2 & \cdots & -1-a_n & 1
        \end{bmatrix}
    \). 
    \end{scriptsize}

    By using the invertible matrix \(Q_1\), estimation error dynamics 
    in \eqref{eq_10} can be written as
    \begin{equation}\label{eq_11}
        e_3(k+1)=(A_3-L_3C_3)e_3(k)+E_3\Delta f(k),
    \end{equation}
    where \(e_3(k)=Q_1e_2(k)\), \(A_3=Q_1A_2Q_1^{-1}\), \(L_3=Q_1L_2\), 
    \(C_3=C_2Q_1^{-1}=[1, 0, \cdots, 0]_{1\times (n+1)}\), and 
    \(E_3=Q_1E_2=[0, \cdots, 0, m]_{1\times (n+1)}^T\). 
    Let \(L_3=[l_1, l_2, \cdots, l_{n+1}]^T\). Then 
    \(
        A_3-L_3C_3=\begin{bmatrix}
            a_n+1-l_1 & 1 & \cdots & 0 & 0\\
            a_{n-1}-a_n-l_2 & 0 & \cdots & 0 & 0\\
            \vdots & \vdots & \ddots & \vdots & \vdots \\
            a_1-a_2-l_n & 0 & \cdots & 0 & 1\\
            -a_1-l_{n+1} & 0 & \cdots & 0 & 0
        \end{bmatrix}
    \). 

    Moreover, the characteristic polynomial of matrix 
    \(\begin{bmatrix}
        -\alpha_1 & 1 & 0 & \cdots & 0\\
        -\alpha_2 & 0 & 1 & \cdots & 0\\
        \vdots & \vdots & \vdots & \ddots & \vdots\\
        -\alpha_n & 0 & 0 & \cdots & 1\\
        -\alpha_{n+1} & 0 & 0 & \cdots & 0
    \end{bmatrix}\) is \(z^{n+1}+\alpha_1 z^n+\cdots+\alpha_nz+\alpha_{n+1}\). 
    Therefore, if all eigenvalues are at the origin, the characteristic polynomial 
    is \(z^{n+1}\), which implies \(\alpha_1=\alpha_2=\cdots=\alpha_{n+1}=0\).
    Then, by equating coefficients, we have 
    \[\bar{A}_3=A_3-L_3C_3=\begin{bmatrix}
        0 & 1 & \cdots & 0 & 0\\
        \vdots & \vdots & \ddots & \vdots & \vdots \\
        0 & 0 & \cdots & 1 & 0\\
        0 & 0 & \cdots & 0 & 1\\
        0 & 0 & \cdots & 0 & 0
    \end{bmatrix}. \] 

    After pole placement, \eqref{eq_11} can be written as 
    \begin{equation}\label{eq_12}
        e_3(k+1)=\bar{A}_3e_3(k)+E_3 \Delta f(k). 
    \end{equation}
    Taking the \(z\)-transform of both sides of \eqref{eq_12} without 
    considering initial condition yields
    \begin{equation}\label{eq_13}
        \tilde{e}_3(z)=(zI-\bar{A}_3)^{-1}E_3 \Delta \tilde{f}(z). 
    \end{equation}
    Since all eigenvalues are at the origin, we have
    \begin{equation}\label{eq_14}
        (zI-\bar{A}_3)^{-1}=\begin{bmatrix}
            1/z & 1/z^2 & \cdots & 1/z^{n+1}\\
            \vdots & \vdots & \ddots & \vdots\\
            0 & 0 & \cdots & 1/z^2\\
            0 & 0 & \cdots & 1/z
        \end{bmatrix}.
    \end{equation}

    Next, substituting \(e_1 (k)=S_1 e(k)\), \(e_2 (k)=S_2 e_1 (k)\), 
    \(e_3 (k)=Q_1 e_2 (k)\), \(E_1=S_1 E\), \(E_2=S_2 E_1\), and \(E_3=Q_1 E_2\) 
    into \eqref{eq_13} yields
    \begin{equation}\label{eq_15}
        S_1\tilde{e}(z)=(Q_1S_2)^{-1}(zI-\bar{A}_3)^{-1}(Q_1S_2)S_1E\Delta \tilde{f}(z). 
    \end{equation}
    Due to \((Q_1S_2)S_1E=[0, \cdots, 0, m]_{1\times (n+1)}^T\), \eqref{eq_15} 
    can be written as 
    \begin{equation}\label{eq_16}
        S_1\tilde{e}(z)=(Q_1S_2)^{-1}\begin{bmatrix}
            1/z^{n+1}\\
            1/z^n\\
            \vdots\\
            1/z
        \end{bmatrix}m\Delta\tilde{f}(z).
    \end{equation}

    The sum of all rows of \(Q_1S_2\) is \([0, \cdots, 0, 1]_{1\times (n+1)}\) because 
    \begin{small}
        \begin{equation*}
            Q_1S_2=\begin{bmatrix}
                1 & 0 & \cdots & 0 & 0\\
                -1-a_n & 1 & \cdots & 0 & 0\\
                \vdots & \vdots & \ddots & \vdots & \vdots\\
                a_3-a_2 & a_4-a_3 & \cdots & 1 & 0\\
                a_2-a_1+a_1 & a_3-a_2+a_2 & \cdots & -1-a_n+a_n & 1
            \end{bmatrix}. 
        \end{equation*}
    \end{small}
    Since \((Q_1S_2)^{-1}(Q_1S_2)=I\) and the sum of all rows of \(Q_1S_2\) is the 
    last row of \(I\), the last row of \((Q_1S_2)^{-1}\) is 
    \([1, 1, \cdots, 1]_{1\times (n+1)}\).

    It follows that the last row of \eqref{eq_16} is 
    \begin{equation}\label{eq_17}
        m\left(\tilde{f}(z)-\tilde{\hat{f}}(z)\right)=\begin{bmatrix}
            1 & 1 & \cdots & 1
        \end{bmatrix}\begin{bmatrix}
            1/z^{n+1}\\
            1/z^n\\
            \vdots\\
            1/z
        \end{bmatrix}m \Delta \tilde{f}(z). 
    \end{equation}
    Taking the inverse \(z\)-transform of \eqref{eq_17} yields
    \begin{small}
    \begin{equation}\label{eq_18}
        \begin{array}{r@{}l}
            f(k)-\hat{f}(k) = & \Delta f(k-n-1)+\cdots+\Delta f(k-2)+\Delta f(k-1)\\
            = & f(k-n)-f(k-n-1)+\cdots+f(k-1)\\
            & -f(k-2)+f(k)-f(k-1)\\
            = & f(k)-f(k-n-1). 
        \end{array}
    \end{equation}
    \end{small}
    We have \(\hat{f}(k)=f(k-n-1)\) in \eqref{eq_thm1}. 

    Now we prove if Assumption \ref{assum2} does not hold, then \eqref{eq_thm1} 
    cannot be obtained. Since \((A_0, C_0)\) is observable, there exists an invertible 
    matrix \(S_1\) such that \((A, E, C)\) is equivalent to \((A_1, E_1, C_1)\) like 
    those in Lemma \ref{lem1} except the new \(E_0^\prime=\frac{1}{m}S_0E_0\). 
    From Lemma \ref{thm1}, the new \(E_0^\prime\) has other nonzero entries 
    rather than the only nonzero entry at the last row. We assume that \((A_1, C_1)\) is 
    observable which guarantees \eqref{eq_thm1} to be obtainable. 
    After the same similarity transformations as in the first part of proof, we find 
    that \((Q_1S_2)S_1E\) has other nonzero entries rather than the only nonzero 
    entry at the last row. Moreover, by using the same procedure of deriving the last row 
    of \((Q_1S_2)^{-1}\), we find that all the entries in the last row have identical  
    values. Therefore, the last row of \eqref{eq_15} with invariant zeros 
    can be written as 
    \begin{equation}\label{eq_thm_1add}
        \begin{array}{{r@{}l}}
            m & \left(\tilde{f}(z)- \tilde{\hat{f}}(z)\right) = \gamma [1, 1, \cdots, 1]\\
             & \begin{bmatrix}
                1/z & 1/z^2 & \cdots & 1/z^{n+1}\\
                \vdots & \vdots & \ddots & \vdots\\
                0 & 0 & \cdots & 1/z^2\\
                0 & 0 & \cdots & 1/z
            \end{bmatrix} \begin{bmatrix}
                \beta_1\\
                \beta_2\\
                \vdots \\
                \beta_{n+1}
            \end{bmatrix}\Delta \tilde{f}(z), 
        \end{array}
    \end{equation}
    where \([\beta_1, \cdots, \beta_{n+1}]^T\triangleq (Q_1S_2)S_1E\), and \(\gamma\) 
    is the identical value of the last row of \((Q_1S_2)^{-1}\). From 
    \eqref{eq_thm_1add}, we have \(\hat{f}(k)\) is equal to a linear combination of 
    \(f(k)\), \(f(k-1)\), \(\cdots\), \(f(k-n-1)\) rather than just a single 
    \(f(k-n-1)\). This is a contradiction. \quad \(\square\)
\end{proof}

\begin{remark}
    Since UIO fully decouples the unknown input from the estimation error dynamics, 
    it provides a theoretical limit for the performance of GMB-ESO, which explains that 
    the estimated disturbance \(\hat{f}(k)\) of GMB-ESO is equal to actual \(f(k)\) 
    with a delay of \(n+1\) steps when placing all eigenvalues at the origin. 
\end{remark}

\begin{remark}
    As shown in Theorem \ref{thm2}, Assumptions \ref{assum1} and \ref{assum2} 
    are necessary and sufficient conditions for GMB-ESO to have the inherent connection 
    with UIO.  
\end{remark}

\begin{remark}
    It has been shown that \(\hat{x}\) and \(\hat{f}\) converge to \(x\) and \(f\) 
    exponentially with the increase of magnitude of \(\omega _o\) in continuous 
    time domain for the observability canonical form system with no model information 
    except relative degree in \cite{xue2015} and with model 
    information in \cite{freidovich2008}. However, Theorem \ref{thm2} considers the 
    case that the system is a general SISO linear system with all available model 
    information. 
\end{remark}

\subsection{Error Characteristics of GMB-ESO}
Unique to ESO is the ability to adjust the smoothness of the estimation 
by changing the observer bandwidth \(\omega_o\) according to different level 
of measurement noise. The following Theorem \ref{thm3} shows that 
the estimation accuracy of GMB-ESO can be adjusted by observer bandwidth and 
why it can smooth the estimation under the influence of measurement noise. 
It also gives an accurate estimation error bound of disturbance 
in terms of \(\omega_o\) if the bound of \(\Delta f\) is known. 

\begin{theorem}\label{thm3}
    If Assumptions \ref{assum1} and \ref{assum2} hold for system \eqref{eq_1},
    then 

    (1) The estimation error of \(f\) in observer \eqref{eq_3} is 
    \begin{equation*}
        f(k)-\hat{f}(k)=h(k)\ast \Delta f(k),
    \end{equation*}
    where \(\ast\) represents convolution, \(\Delta f(k) = f(k+1) - f(k)\), 
    and\footnote{For simplicity, let \(\prod_{j=0}^{-1}(k+j)=1\).}
    \begin{small}
        \begin{equation}\label{eq_27}
                \begin{array}{l}
                     h(k)=   \\
                      \begin{cases}
                            1 & 1\leqslant k \leqslant n+1\\
                            \sum\limits_{i = 1}^{n+1}  \frac{1}{(i-1)!}\left(\prod\limits_{j=-i+1}^{-1}(k+j)\right)(1-\omega_o)^{i-1}\omega_o^{k-i} & k\geqslant n+2.
                        \end{cases} 
                \end{array}
        \end{equation}
    \end{small}

    (2) The estimation error of \(f\) decreases 
    monotonically after \(n+1\) steps of the change of \(f\) with reducing the value of 
    \(\omega_o\) \((0\leqslant \omega_o<1)\). 
    
\end{theorem}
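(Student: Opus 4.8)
The plan is to reuse the chain of similarity transformations from the proof of Theorem \ref{thm2}, only now the $n+1$ observer eigenvalues are placed at $\omega_o$ instead of at the origin. After the transformations $e\mapsto e_1\mapsto e_2\mapsto e_3$, the error again obeys $e_3(k+1)=\bar A_3 e_3(k)+E_3\Delta f(k)$, where $\bar A_3=A_3-L_3C_3$ is in observer companion form but whose characteristic polynomial is now $(z-\omega_o)^{n+1}=z^{n+1}+\alpha_1 z^n+\cdots+\alpha_{n+1}$, so its first column is $[-\alpha_1,\dots,-\alpha_{n+1}]^T$ with $\alpha_i=\binom{n+1}{i}(-\omega_o)^i$. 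Taking the $z$-transform and reusing the two facts already established in Theorem \ref{thm2} --- that the last row of $(Q_1S_2)^{-1}$ is $[1,\dots,1]$ and that $(Q_1S_2)S_1E=[0,\dots,0,m]^T$ --- the last row of the transformed identity \eqref{eq_15} collapses to the scalar relation $\tilde f(z)-\tilde{\hat f}(z)=H(z)\,\Delta\tilde f(z)$ (the common factor $m$ cancelling), where $H(z)=\sum_{j=1}^{n+1}v_j(z)$ and $v(z)=(zI-\bar A_3)^{-1}e_{n+1}$ is the last column of the resolvent. Thus $f(k)-\hat f(k)=h(k)\ast\Delta f(k)$ with $h$ the inverse transform of $H$, and everything reduces to computing $H(z)$ in closed form.

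For that computation I would avoid inverting the full resolvent and instead sum all rows of the linear system $(zI-\bar A_3)v=e_{n+1}$. The column sums of $zI-\bar A_3$ are $z-1$ for columns $2,\dots,n+1$, and $z+\sum_i\alpha_i=(z-1)+(1-\omega_o)^{n+1}$ for the first column (using $(1-\omega_o)^{n+1}=1+\sum_i\alpha_i$, the characteristic polynomial evaluated at $z=1$). Solving the last-row equation gives $v_1=(z-\omega_o)^{-(n+1)}$, so the summed identity becomes $(z-1)H(z)+(1-\omega_o)^{n+1}(z-\omega_o)^{-(n+1)}=1$, whence
\[
H(z)=\frac{(z-\omega_o)^{n+1}-(1-\omega_o)^{n+1}}{(z-1)(z-\omega_o)^{n+1}}.
\]
The factor $(z-1)$ cancels via $a^{n+1}-b^{n+1}=(a-b)\sum_{i=0}^{n}a^{n-i}b^i$ with $a=z-\omega_o$, $b=1-\omega_o$, leaving the clean partial-fraction form $H(z)=\sum_{l=1}^{n+1}(1-\omega_o)^{l-1}(z-\omega_o)^{-l}$. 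Inverting each term with the standard pair $(z-\omega_o)^{-l}\leftrightarrow\binom{k-1}{l-1}\omega_o^{k-l}$ and noting that $\frac{1}{(i-1)!}\prod_{j=-i+1}^{-1}(k+j)=\binom{k-1}{i-1}$ reproduces the $k\ge n+2$ branch of \eqref{eq_27}. For $1\le k\le n+1$ the binomials with $l>k$ vanish, so the sum is the complete binomial expansion $\sum_{m=0}^{k-1}\binom{k-1}{m}(1-\omega_o)^m\omega_o^{k-1-m}=\bigl(\omega_o+(1-\omega_o)\bigr)^{k-1}=1$, giving the first branch and completing part (1).

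For part (2) the key observation is that, after the substitution $m=i-1$, $h(k)=\sum_{m=0}^{n}\binom{k-1}{m}(1-\omega_o)^{m}\omega_o^{(k-1)-m}$ is exactly the lower tail of a $\mathrm{Binomial}(k-1,\,1-\omega_o)$ distribution. Differentiating a truncated binomial sum telescopes to a single term,
\[
\frac{\partial h(k)}{\partial\omega_o}=(k-1)\binom{k-2}{n}(1-\omega_o)^{n}\,\omega_o^{\,k-2-n}.
\]
For $k\ge n+2$ every factor on the right is nonnegative (and strictly positive for $0<\omega_o<1$), so $h(k)$ is monotonically nondecreasing in $\omega_o$; equivalently, the error in the tail $k\ge n+2$ shrinks monotonically as $\omega_o$ is reduced, which is the assertion of part (2). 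The same binomial-tail reading also shows $h(k)$ decreasing in $k=N+1$, yielding the monotone decay in time.

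The main obstacle I anticipate is the closed-form evaluation of $H(z)$: the naive route of inverting $(zI-\bar A_3)^{-1}$ for arbitrary $\omega_o$ (generalizing \eqref{eq_14}) is unpleasant, and the row-sum shortcut succeeds only because the all-ones last row of $(Q_1S_2)^{-1}$ and the vector $(Q_1S_2)S_1E=m\,e_{n+1}$ let the bookkeeping collapse to the scalar $H(z)$. The secondary obstacle is recognizing the truncated-binomial structure behind \eqref{eq_27}; without it the derivative in part (2) is a messy sum, whereas with it the telescoping identity delivers the sign of $\partial h/\partial\omega_o$ immediately.
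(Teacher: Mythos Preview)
Your proposal is correct and reaches the same closed form $H(z)=\sum_{l=1}^{n+1}(1-\omega_o)^{l-1}(z-\omega_o)^{-l}$ and the same derivative $\partial h/\partial\omega_o=(k-1)\binom{k-2}{n}(1-\omega_o)^n\omega_o^{k-n-2}$ as the paper (the latter is exactly \eqref{eq_28} after rewriting the product as a binomial coefficient). The route, however, differs in the central step. The paper does \emph{not} sum rows of the resolvent equation; instead it introduces one further similarity transformation $Q_2$ taking $A_3-L_3C_3$ from observer companion form to controller companion form $A_4$, where $(zI-A_4)^{-1}E_4$ is trivially $m\,[1,z,\dots,z^n]^T/(z-\omega_o)^{n+1}$, and then evaluates $[1,\dots,1]Q_2$ against this vector to obtain \eqref{eq_26}. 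Your row-sum shortcut bypasses $Q_2$ entirely: exploiting that the column sums of $zI-\bar A_3$ are all $z-1$ except the first, together with the cofactor $v_1=(z-\omega_o)^{-(n+1)}$, you get $H(z)$ in one line and then cancel $z-1$ by the $a^{n+1}-b^{n+1}$ factorization. This is more elementary and makes the cancellation of $z-1$ transparent; the paper's $Q_2$ route is more mechanical and keeps everything in the standard ``change to companion form'' toolbox. One small wording issue: $v_1$ is not obtained from ``the last-row equation'' of $(zI-\bar A_3)v=e_{n+1}$ alone---it is the $(1,n+1)$ cofactor divided by the characteristic polynomial---so you should state that explicitly when you write the proof in full.
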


\begin{proof}
    After a series of similarity transformations made in Theorem \ref{thm2}, 
    the estimation error dynamics of observer \eqref{eq_3} for 
    the augmented system \eqref{eq_2} can be written as
    \begin{equation}\label{eq_19}
        e_3(k+1)=(A_3-L_3C_3)e_3(k)+E_3 \Delta f(k),
    \end{equation}
    where \(e_3(k)=(Q_1S_2S_1)e(k)\), \(L_3=(Q_1S_2S_1)L\), \(A_3=(Q_1S_2S_1)A(Q_1S_2S_1)^{-1}\), 
    \(C_3=C(Q_1S_2S_1)^{-1}\), and \(E_3=(Q_1S_2S_1)E\). 
    Since \(A_3\) is in observer companion form, the entries of the first 
    column of \(A_3-L_3 C_3\) are determined by the coefficients of 
    characteristic polynomial placed all eigenvalues at \(\omega_o\), which is
    \begin{equation}\label{eq_20}
        A_3-L_3C_3=\begin{bmatrix}
            (-1)^2\binom{n+1}{1}\omega_o & 1 & \cdots & 0 & 0\\
            (-1)^3\binom{n+1}{2}\omega_o^2 & 0 & \cdots & 0 & 0\\
            \vdots & \vdots & \ddots & \vdots & \vdots\\
            (-1)^{n+1}\binom{n+1}{n}\omega_o^n & 0 & \cdots & 0 & 1\\
            (-1)^{n+2}\binom{n+1}{n+1}\omega_o^{n+1} & 0 & \cdots & 0 & 0\\
        \end{bmatrix}. 
    \end{equation}
    Now we need to obtain the transfer function of \eqref{eq_19} by taking \(\Delta f(k)\) as input and
    the last row of \(e(k)\), i.e., \(f(k)-\hat{f}(k)\), as output. According to 
    \eqref{eq_20}, it is not easy to compute \((zI-A_3+L_3C_3)^{-1}\) 
    directly. A similarity transformation \(Q_2\) is used to 
    transform \(A_3-L_3 C_3\) to a controller companion form \(A_4\), where \(A_4=\)
    \begin{small}
    \begin{equation*}
        \begin{bmatrix}
            0 & 1 & \cdots & 0\\
            \vdots & \vdots & \ddots & \vdots\\
            0 & 0 & \cdots & 1\\
            (-1)^{n+2}\binom{n+1}{n+1}\omega_o^{n+1} & (-1)^{n+1}\binom{n+1}{n}\omega_o^{n} & \cdots & (-1)^2\binom{n+1}{1}\omega_o
        \end{bmatrix}, 
    \end{equation*}
    \end{small}
    \begin{equation*}
        Q_2=\begin{bmatrix}
            1 & 0 & \cdots & 0\\
            (-1)\binom{n+1}{1}\omega_o & 1 & \cdots & 0\\
            \vdots & \vdots & \ddots & \vdots\\
            (-1)^{n-1}\binom{n+1}{n-1}\omega_o^{n-1} & (-1)^{n-2}\binom{n+1}{n-2}\omega_o^{n-2} & \cdots & 0\\
            (-1)^{n}\binom{n+1}{n}\omega_o^{n} & (-1)^{n-1}\binom{n+1}{n-1}\omega_o^{n-1} & \cdots & 1 
        \end{bmatrix}. 
    \end{equation*}
    
    Then, \eqref{eq_19} can be reformulated as
    \begin{equation}\label{eq_21}
        e_4(k+1)=A_4e_4(k)+E_4 \Delta f(k),
    \end{equation}
    where \(e_4(k)=Q_2^{-1}e_3(k)\), \(A_4=Q_2^{-1}(A_3-L_3C_3)Q_2\), and 
    \(E_4=Q_2^{-1}E_3\).

    Taking the \(z\)-transform of both sides of \eqref{eq_21} without 
    considering initial condition yields
    \begin{equation}\label{eq_22}
        \tilde{e}_4(z)=(zI-A_4)^{-1}E_4\Delta \tilde{f}(z).
    \end{equation}
    Since \(A_4\) is in a controller companion form and \(E_4=[0, \cdots, 0, m]^T\),
    the transfer function of \eqref{eq_22} is
    \begin{equation}\label{eq_23}
        (zI-A_4)^{-1}E_4=\begin{bmatrix}
            1/(z-\omega_o)^{n+1}\\
            \vdots\\
            z^{n-1}/(z-\omega_o)^{n+1}\\
            z^{n}/(z-\omega_o)^{n+1}
        \end{bmatrix}m. 
    \end{equation}
    From \eqref{eq_19}, \eqref{eq_21}, \eqref{eq_22}, and \eqref{eq_23}, 
    the \(z\)-transform of the original error estimation dynamics is 
    \begin{equation}\label{eq_24}
        S_1\tilde{e}(z)=(Q_1S_2)^{-1}Q_2\begin{bmatrix}
            1/(z-\omega_o)^{n+1}\\
            \vdots\\
            z^{n-1}/(z-\omega_o)^{n+1}\\
            z^{n}/(z-\omega_o)^{n+1}
        \end{bmatrix}m\Delta \tilde{f}(z). 
    \end{equation}
    The last row of \eqref{eq_24} is 
    \begin{equation}\label{eq_25}
        \begin{array}{r@{}l}
            m\left(\tilde{f}(z)-\tilde{\hat{f}}(z)\right) & \\
            = [1, \cdots, 1, 1] & Q_2\begin{bmatrix}
                1/(z-\omega_o)^{n+1}\\
                \vdots\\
                z^{n-1}/(z-\omega_o)^{n+1}\\
                z^{n}/(z-\omega_o)^{n+1}
            \end{bmatrix}m\Delta \tilde{f}(z). 
        \end{array}
    \end{equation}

    Since the original system is linear and satisfies additivity and 
    homogeneity, the estimation error of the disturbance \(f\) decreases with 
    reducing the value of \(\omega_o\) for all \(k\geqslant 0\) if the inverse 
    \(z\)-transform of the transfer function of \eqref{eq_25} 
    decreases with smaller \(\omega_o\) for all \(k\geqslant 0\). The transfer 
    function of \eqref{eq_25} is
    \begin{equation}\label{eq_26}
        \begin{array}{r@{}l}
            \tilde{h}(z) & =[1, \cdots, 1, 1]Q_2\begin{bmatrix}
                1/(z-\omega_o)^{n+1}\\
                \vdots\\
                z^{n-1}/(z-\omega_o)^{n+1}\\
                z^{n}/(z-\omega_o)^{n+1}
            \end{bmatrix}\\
            & =\sum\limits_{i = 1}^{n+1} \dfrac{(1-\omega_o)^{i-1}(z-\omega_o)^{n+1-i}}{(z-\omega_o)^{n+1}}. 
        \end{array}
    \end{equation}
    Taking the inverse \(z\)-transform of \eqref{eq_26} 
    yields \eqref{eq_27}. 

    The derivative of \(h(k)\) with respect to \(\omega_o\) is 
    \begin{equation}\label{eq_28}
        \dfrac{\partial h(k)}{\partial \omega_o} = \dfrac{1}{n!}(1-\omega_o)^n(k-n-1)\left(\prod\limits_{j=-n}^{-1}(k+j)\right)\omega_o^{k-n-2}.
    \end{equation}
    Therefore, if \(0<\omega_o <1\), then \(\partial h/\partial \omega_o >0\) for all \(k>n+1\).
    And \(h(k)\) decreases monotonically with respect to \(\omega_o\) for all 
    \(k>n+1\), which proves (2). \quad \(\square\)
\end{proof}

\begin{remark}
    The monotonic decrease of estimation error of the disturbance \(f\) 
    with respect to \(\omega_o\) shows that the estimation error during the 
    change of \(f\) cannot be increased with a smaller \(\omega_o\). Therefore, 
    Theorem \ref{thm3} guarantees the performance of estimation of \(f\) not only 
    in the steady state but also in the transient state. 
\end{remark}

\begin{remark}
    The accurate estimation error bound of the disturbance \(f\) can be obtained 
    firstly in Theorem \ref{thm3}, whereas \cite{xue2015} and 
    \cite{freidovich2008} provide only conservative error bounds. The accurate 
    error bound can be used in robust control barrier function based quadratic 
    programs for safety critical systems \cite{alan2022disturbance}. It is 
    interesting to find that the estimation error bound of \(f\) is only related to 
    \(\omega_o\) and \(\Delta f\). 
\end{remark}

\subsection{Numerical Validations of the UIO and GMB-ESO Connection}
Consider a position control of a series elastic actuator (SEA) system as in \cite{chen2021}. 
The discretized SEA system with sample time \(20\) ms is given by the matrices
\begin{small}
\(
    A_0 = \begin{bmatrix}
        0 & 1 & 0 & 0\\
        -0.6587 & 1.6494 & 3.4847e-4 & 0\\
        0 & 0 & 0 & 1\\
        1.7991 & 0 & -0.9829 & 1.8929
    \end{bmatrix}, B_0 = \begin{bmatrix}
        0\\
        0\\
        0\\
        74.96
    \end{bmatrix},
\)
\end{small}
\(
    C_0 = [1, 0, 0, 0], E_0 = [0, 0, 0, 1]^T.
\)
Since the relative degree of this system is four, the measurement 
\(y(k+5)\) in \eqref{eq_6} needs to be used to design a UIO, which means 
the UIO has a time delay of five time steps. 

To validate the inherent connection between GMB-ESO and UIO, a step signal of known 
input \(u\) of \(1\) Nm is applied at \(0.1\) s and a step signal of disturbance \(f\) 
provided to the input side of the plant is applied at \(0.5\) s with magnitude of 
\(2.5\) Nm. The same noise with power of \(2\times 10^{-8}\) by using Band-Limited White 
Noise in Simulink$^\circledR$ is added to output measurements in both observers. 
The eigenvalues \(\omega_o\) of GMB-ESO and UIO are all placed at \(1000\) rad/s and \(40\) rad/s 
in continuous time domain (\(2.0612\times 10^{-9}\) and \(0.4493\) in discrete time domain), 
respectively. The initial state of the plant is \(x_0=[0, 0, 0, 0]^T\). 

The actual and estimated total disturbances are given in Fig. \ref{fig1}. The 
total disturbances estimated by GMB-ESO and UIO are the same under influence 
of the exact same white noise, while 
their difference is shown in the bottom. 

\begin{figure}[htbp]
    \begin{center}
        \includegraphics[width=3.5 in]{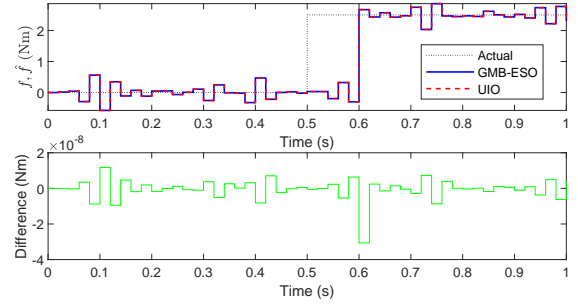}
        \caption{The actual total disturbance and its estimations by GMB-ESO and UIO}
        \label{fig1}
    \end{center}
\end{figure}

Moreover, GMB-ESO can smooth the estimation of the total disturbance. In 
motion control, the sample time is usually \(1\) ms. After discretizing 
the SEA with this sample time, the discretized 
system is given by the matrices
\begin{small}
    \(
        A_0 = \begin{bmatrix}
            0 & 1 & 0 & 0\\
            -0.9793 & 1.9793 & 1.056e-6 & 0\\
            0 & 0 & 0 & 1\\
            0.0046 & 0 & -0.9991 & 1.9989
        \end{bmatrix}, B_0 = \begin{bmatrix}
            0\\
            0\\
            0\\
            0.1904
        \end{bmatrix},
    \)
\end{small}
\(
    C_0 = [1, 0, 0, 0], E_0 = [0, 0, 0, 1]^T.
\)

Since the sample time is very small and the UIO tries to derive a 
very accurate total disturbance to let the plant follow the polluted 
measurements quickly, the estimated total disturbance of UIO is very 
large as shown in the top of Fig. \ref{fig2b}. So, the UIO cannot be used in the case 
when the measurement noise is large and the sample time is small. 
However, by adjusting the observer bandwidth \(\omega_o\), GMB-ESO has the ability 
to smooth the estimated total disturbance. Fig. \ref{fig2b} shows that the 
lower the observer bandwidth is, the smoother the estimated total 
disturbance acts, the slower the estimated total disturbance follows 
the actual total disturbance.

\begin{figure}[htbp]
    \begin{center}
        \includegraphics[width=3.5 in]{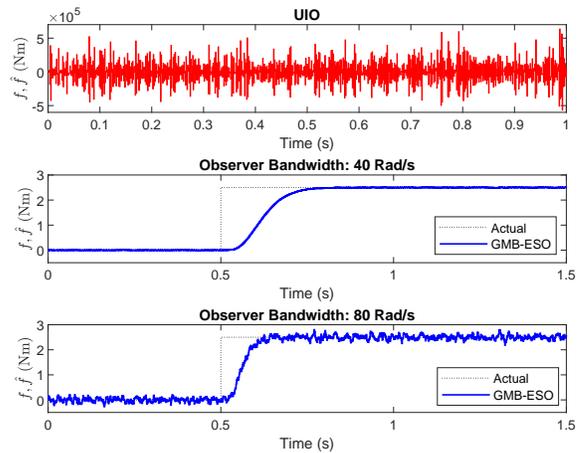}
        \caption{The estimated total disturbances using UIO and GMB-ESO with different observer bandwidth}
        \label{fig2b}
    \end{center}
\end{figure}

\section{GMB-ESO with Built-In Zero Dynamics}
\label{sec:ESOwithzeros}
In this section, GMB-ESO with built-in zero dynamics is proposed for a SISO 
linear system. To make the design method of ESO clearer, the nominal model 
of conventional ESO, i.e., the observability canonical form, is firstly reviewed 
systematically. 

System \eqref{eq_1} considered in the preceding sections only takes into account 
a single total disturbance. However, real-world plants may be subject to 
numerous internal and external disturbances. Only inputs and outputs of the 
plant are available. We need to find a nominal model to represent the actual 
plant as close as possible. 

Consider the following nominal SISO system given by 
\begin{equation}\label{eq_29}
    \begin{cases}
        x(k+1)=A_0 x(k)+B_0u(k)+D_0 d(k)\\
        y(k)=C_0x(k)
    \end{cases}
\end{equation}
where \(d\in \mathbb{R}^q\) is the unknown disturbance vector including internal 
and external disturbances, and \(D_0\in \mathbb{R}^{n\times q}\). Assumption 
\ref{assum1} is satisfied. 
Note that \(E_0f(k)\) in system \eqref{eq_1} is replaced by \(D_0d(k)\) 
as \(f\) represents the total disturbance but \(d\) denotes the actual internal 
and external disturbances. 

\subsection{An Unsolved Problem in ESO Design}
Under Assumption \ref{assum1}, 
if there are no disturbances, a SISO linear system can be transformed to a normal form 
by using the following invertible matrix \cite{isidori1995} 
\begin{equation}\label{eq_31}
    T_1=\begin{bmatrix}
        C_0\\
        \vdots \\
        C_0A_0^{r-1}\\
        \Phi 
    \end{bmatrix},
\end{equation}
where \(\Phi\in\mathbb{R}^{m\times n}\), \(\Phi B_0=0\), and \(r\triangleq n-m\) is the 
relative degree. Since the relative degree is \(r\), we have 
\begin{equation}\label{eq_a1}
    \begin{array}{r@{}l}
        C_0A_0^{i-1}B_0= & 0 \quad \text{for all} \quad i=1, \cdots, r-1\\
        C_0A_0^{r-1}B_0\neq & 0. 
    \end{array}
\end{equation}

Due to the 
existence of disturbances in system \eqref{eq_29}, we set 
\begin{equation}\label{eq_a2}
    \begin{bmatrix}
        \zeta(k) \\
        \eta(k)
    \end{bmatrix}=T_1 x(k)+ \begin{bmatrix} 
        M\\
        0_{m\times (r-1)q}
    \end{bmatrix} \boldsymbol{d}(k),
\end{equation}
where \(\boldsymbol{d}(k)=[d(k)^T, d(k+1)^T, \cdots, d(k+r-2)^T]^T\), 
\(M=\begin{bmatrix}
    0 & 0 & \cdots & 0\\
    C_0D_0 & 0 & \cdots & 0\\
    C_0A_0D_0 & C_0D_0 & \cdots & 0\\
    \vdots & \vdots & \ddots & \vdots \\
    C_0A_0^{r-2}D_0 & C_0A_0^{r-3}D_0 & \cdots & C_0D_0
\end{bmatrix}\), \(\zeta \in \mathbb{R}^r\), 
and \(\eta\in\mathbb{R}^m\). 
For convenience, let \(T_1^{-1}\triangleq [T_{1a}, T_{1b}]\), 
where \(T_{1a}\in \mathbb{R}^{n\times r}\) and \(T_{1b}\in \mathbb{R}^{n\times m}\). 

From \eqref{eq_a2}, we have an expression of \(x\) in terms of \(\zeta\), 
\(\eta\) and \(\boldsymbol{d}\). Substituting this \(x\) and \eqref{eq_a1} into 
system \eqref{eq_29} yields the following normal form 
\begin{equation}\label{eq_30}
    \begin{cases}
        \zeta(k+1)=\hat{A}_0\zeta(k)+\hat{B}_0u(k)+\hat{E}_0(\hat{F}_0\eta(k)+f_1(k))\\
        \eta(k+1)=\hat{S}_0\eta(k)+\hat{G}_0\zeta(k)+d_\eta(k)\\
        y(k)=\hat{C}_0\zeta(k)
    \end{cases}
\end{equation}
where 
\(
    \hat{A}_0=\begin{bmatrix}
        0 & 1 & \cdots & 0\\
        \vdots & \vdots & \ddots & \vdots \\
        0 & 0 & \cdots & 1\\
        \alpha_1 & \alpha_{2} & \cdots & \alpha_r
    \end{bmatrix}
\),
\(  
    \hat{B}_0=\begin{bmatrix}
        0\\
        \vdots \\
        0\\
        b_0
    \end{bmatrix}
\), 
\(  
    \hat{E}_0=\begin{bmatrix}
        0\\
        \vdots \\
        0\\
        1
    \end{bmatrix}
\), 

\([\alpha_1, \cdots, \alpha_r]\triangleq C_0A_0^rT_{1a}\), 
\(b_0=C_0A_0^{r-1}B_0\), 
\(\hat{F}_0 =C_0A_0^rT_{1b}\), 
\(  \hat{C}_0=[1, 0, \cdots, 0]\), \(\hat{S}_0 = \Phi A_0 T_{1b}\), 
\(f_1(k)=\sum\limits_{i = 1}^{r} C_0A_0^{r-i}D_0d(k+i-1) - C_0A_0^rT_{1a}M\boldsymbol{d}(k)\), 
\(\hat{G}_0 = \Phi A_0 T_{1a}\), 
\(d_\eta (k)=\Phi D_0 d(k)- \Phi A_0 T_{1a}M\boldsymbol{d}(k)\). 
The total disturbance in conventional ESO design is defined as 
\begin{equation}\label{eq_total_a}
    f_a(k)=\hat{F}_0\eta(k)+f_1(k). 
\end{equation}

As shown in system \eqref{eq_30}, the zero dynamics is separated from the observability 
canonical form. There are no invariant zeros between \(f_a\) and \(y\), in which \(f_a\) is 
composed of \(\hat{F}_0\eta\) from zero dynamics and \(f_1\) from the  
model uncertainty and external disturbances. Therefore, the information of 
zero dynamics cannot be added into the observer design. 

\subsection{Adding Given Zero Dynamics to GMB-ESO}
To add zero dynamics in ESO design, we set 
\begin{equation}\label{eq_32}
    \bar{x}(k)=T_2 x(k) + P \bar{\boldsymbol{d}}(k),
\end{equation}
where 
\(P=\begin{bmatrix}
    0 & 0 & \cdots & 0\\
    C_0D_0 & 0 & \cdots & 0\\
    C_0A_0D_0 & C_0D_0 & \cdots & 0\\
    \vdots & \vdots & \ddots & \vdots \\
    C_0A_0^{n-2}D_0 & C_0A_0^{n-3}D_0 & \cdots & C_0D_0
\end{bmatrix}\), 
\(T_2=[C_0^T, \cdots, (C_0A_0^{n-1})^T]^T\), 
\(\bar{\boldsymbol{d}}(k)=[d(k)^T, d(k+1)^T, \cdots, d(k+n-2)^T]^T\). 
 
From \eqref{eq_32}, we have an expression of \(x\) in terms of \(\bar{x}\) 
and \(\bar{\boldsymbol{d}}\). Substituting this \(x\) into 
system \eqref{eq_29} yields the following observability canonical form 
\begin{equation}\label{eq_33}
    \begin{cases}
        \bar{x}(k+1)=\bar{A}_0\bar{x}(k)+\bar{B}_0u(k)+\bar{E}_0f_b(k)\\
        y(k)=\bar{C}_0\bar{x}(k)
    \end{cases}
\end{equation}
where \(\bar{A}_0=T_2A_0T_2^{-1}\), 
\(\bar{B}_0=T_2B_0\), \(\bar{C}_0=C_0T_2^{-1}\), \(\bar{E}_0=[0, \cdots, 0, 1]^T\), 
and the total disturbance in GMB-ESO is 
\begin{equation}\label{eq_total_b}
    f_b(k)=\sum\limits_{i = 1}^{n}C_0A_0^{n-i}D_0d(k+i-1) 
    -C_0A_0^nT_2^{-1}P\bar{\boldsymbol{d}}(k).
\end{equation}

System \eqref{eq_33} can be converted back to the original form \eqref{eq_29} with the 
total disturbance \(f_b\) added as follows:
\begin{equation}\label{eq_34}
    \begin{cases}
        \tilde{x}(k+1)=A_0\tilde{x}(k)+B_0u(k)+E_0f_b(k)\\
        y(k)=C_0\tilde{x}(k)
    \end{cases}
\end{equation}
where \(E_0=T_2^{-1}\bar{E}_0\) and \(\tilde{x}=T_2^{-1}\bar{x}\). 

Since there are no invariant zeros between \(y\) and \(f_b\), 
an observer \eqref{eq_3} of augmented system of system \eqref{eq_34} can be 
used to estimate \(\tilde{x}\) and \(f_b\). Comparing \eqref{eq_total_a} and 
\eqref{eq_total_b}, it is clear that the known zero dynamics is no longer
a part of the total disturbance to be estimated, thereby greatly reducing the
load on the observer. 

\begin{remark}
    For multi-output systems, the observability matrix 
    can be easily obtained if the system is 
    observable. Therefore, the method proposed above can be generalized to 
    multi-input multi-output (MIMO) systems. Moreover, ESO-like methods 
    have already been generalized to MIMO systems \cite{wang2015output, wu2020performance}. 
\end{remark}

\section{Concluding Remarks}
\label{sec:conclusion}
To address the problem of uncertainties, the well-established ESO design 
principle is extended in this paper to the plants rigorously studied in 
the existing mathematical control theory: single-input single-output 
linear time-invariant systems with a given state space model. The proposed 
GMB-ESO provides effective means to estimate both the state and the lumped 
disturbance, i.e., the total disturbance, where the data of the combined 
effects of unmodeled dynamics and external disturbances on the plant is 
obtained in real time. Such information, needless to say, is critical in 
allowing the rich body of knowledge in modern control theory to be directly 
applicable to engineering practice where uncertainties abound. 

In particular, the mature and rigorous body of work on UIO functions as a 
bridge in this paper to help analytically establish and numerically verify 
the proposed approach. 
First, Lemma \ref{thm1} shows that the condition for convergence of ESO in 
continuous-time systems \cite{bai2019} is equivalent to Assumption \ref{assum2}, 
i.e., no invariant zeros exist between the plant output and the total disturbance. 
Secondly, Theorem \ref{thm2} shows that UIO sets the ceiling, a theoretical 
limit, for the performance of GMB-ESO, and that GMB-ESO can reach this ceiling only 
if its bandwidth (\(\omega_o\)) is pushed to be infinite and Assumption 
\ref{assum2} holds. In addition, as Theorem \ref{thm3} implies the output 
of GMB-ESO can be made as smooth as needed by adjusting \(\omega_o\) in the 
presence of measurement noise. Furthermore, the closed-form error-bound 
for the estimation of the total disturbance \(f\) is obtained and shown 
to be monotonically decreasing as the bandwidth \(\omega_o\) increases. 
Finally, a particular form of GMB-ESO with built-in zero dynamics is 
introduced for those plants with zeros between the input and output, 
which have been left untreated in all existing ESO designs. 
Including zero dynamics in ESO reduces the load on ESO by incorporating into it the 
known dynamics and not wasting its bandwidth.





\bibliographystyle{IEEEtran}
\bibliography{reference} 




\end{document}